\documentclass{svproc}
\usepackage{dirtytalk} % correct English quotation marks
\usepackage{type1cm}        % activate if the above 3 fonts are
\usepackage{graphicx}        % standard LaTeX graphics tool
\usepackage[bottom]{footmisc}% places footnotes at page bottom
\usepackage{xcolor}
\usepackage{algorithm}
\usepackage{algpseudocode}
\usepackage{newtxmath}       % selects Times Roman as basic font
\usepackage{hyperref}
\usepackage[style=numeric,maxbibnames=99]{biblatex}

\usepackage{url}

\begin{document}
	\mainmatter              % start of a contribution
	\title{Robust Bayes Acts under Prior Perturbations: Contamination, Stability, and Selection Paths}
	\titlerunning{Robustness of Bayes-Acts}  % abbreviated title (for running head)
	%                                     also used for the TOC unless
	%                                     \toctitle is used
	%
	\author{Christoph Jansen$^1$ \& Georg Schollmeyer$^2$}
	\authorrunning{Christoph Jansen \& Georg Schollmeyer} % abbreviated author list (for running head)
	%
	%%%% list of authors for the TOC (use if author list has to be modified)
	% \tocauthor{Ivar Ekeland, Roger Temam, Jeffrey Dean, David Grove,
	% 	Craig Chambers, Kim B. Bruce, and Elisa Bertino}
	%
	\institute{$^1$School of Computing \& Communications, Lancaster University Leipzig, Germany\\
    $^2$ Department of Statistics, Ludwig-Maximilians-Universit\"at M\"unchen, Germany}
	
	\maketitle              % typeset the title of the contribution
	
	\begin{abstract}
This paper develops a quantitative framework to assess the robustness of Bayes-optimal decisions in finite decision problems under model uncertainty. We introduce two complementary stability notions for acts: the robustness radius, measuring the largest perturbation of a reference prior under which an act remains Bayes-optimal, and the contamination need, quantifying the minimal perturbation required for an act to become Bayes-optimal under some nearby prior. Both concepts are characterized via linear programming formulations and computed efficiently using bisection methods exploiting monotonicity properties. Building on these stability measures, we propose a cost-adjusted stability criterion that integrates robustness considerations with act-specific selection costs, yielding a parametric family of decision rules indexed by a regularization parameter. We analyze how optimal act selection evolves along this parameter and derive selection paths that reveal structural transitions between stability-driven and cost-driven regimes. The framework is applied to a portfolio choice problem under uncertainty between different economic regimes. Concretely, using data on historical ETF returns, we compute robustness and contamination profiles for six portfolio strategies and analyze their behavior under heterogeneous belief specifications. The results illustrate that robustness-based selection refines classical expected utility by accounting for prior misspecification.

		\keywords{Decision theory under uncertainty, Bayes acts, robustness, contamination neighborhood, imprecise probabilities, linear programming, bisection algorithms, stability analysis, portfolio choice}
	\end{abstract}
\section{Introduction}
Decision theory under uncertainty, as pioneered by the works of von Neumann and Morgenstern and Savage (\cite{vnm1944,s1954}), is built on the assumption that a decision maker evaluates acts by their expected utility with respect to a unique probability measure over states of nature. While this framework provides a powerful normative foundation for decision making under uncertainty, its practical applicability depends critically on the specification of a precise prior distribution modelling the decision maker's beliefs. In many realistic settings, however, such a precise belief is either unavailable or only weakly justified, leading to substantial sensitivity of optimal decisions to small perturbations in the prior (e.g., \cite{held2009,beer2013,JANSEN201849}). This sensitivity has motivated a broad literature on decision making under partial or imprecise probabilistic information (\cite{kofler,walley1991,augustin2014introduction}), including approaches based on credal sets (\cite{levi1974}), robust statistical models (\cite{huber1980}), and regularization techniques (\cite{hodges1952use}). These frameworks typically replace a single prior with a set of plausible distributions or modify the decision criterion to penalize undesirable behavior under worst-case scenarios. While effective, these approaches often shift the focus from the structure of Bayes-optimality itself to alternative notions of admissibility or robustness (e.g., \cite{jansen2018some,jsa2018,jbas2022,dependent,Jansen2023multi,jansen2025contributions}).
\\[.15cm]
In contrast, following a short introduction to classical decision theory (Section~\ref{dtu}), this paper proposes a direct robustness analysis of Bayes-optimal acts as a function of perturbations in the underlying prior. We introduce two complementary stability measures (Section~\ref{measuring}). The first, the \textit{robustness radius}, quantifies how far a reference prior can be perturbed while preserving the Bayes-optimality of a given act across all similar distributions. The second, the \textit{contamination need}, captures the minimal perturbation required to render an act Bayes-optimal under some admissible perturbation of the reference belief. Together, these measures provide a stability analysis of classical Bayes acts. A key feature of our approach is that both quantities admit exact characterizations via linear programming problems with perturbation constraints. As a result, robustness analysis becomes computationally tractable even in moderately high-dimensional decision problems. We introduce a cost-adjusted stability criterion that combines robustness measures with exogenous selection costs. This yields a parametric decision rule indexed by a regularization parameter, which governs the trade-off between stability with respect to prior misspecification and selection costs. The resulting selection paths provide a dynamic view of how optimal decisions evolve as the emphasis shifts from robustness to  cost efficiency. We showcase the usefulness of the proposed framework in a portfolio choice setting under uncertainty about different economic regimes (Section~\ref{portfolio}). 
%Using a finite set of economic states derived from empirical return-volatility clustering, we construct utility matrices based on historical ETF returns and specify heterogeneous priors reflecting different economic narratives. The analysis of robustness radii, contamination needs, and cost-adjusted selection paths reveals how different portfolio strategies vary in their sensitivity to belief perturbations and how robustness considerations interact with cost constraints in shaping optimal choices.
%
\section{Decision Theory under Uncertainty} \label{dtu}
\subsection{The Basic Setup}
Throughout, we consider the basic model of finite decision theory (e.g., \cite{.2005}). A \textit{decision maker} (or \textit{agent}) is faced with the task of choosing an act $a_i$ from a finite set $\mathbb{A}=\{a_1 , \dots , a_n\}$ of available alternatives. The consequence of this choice is uncertain, as the utility of an act depends on the realization of a \textit{state of nature} drawn from a finite set $\Theta=\{\theta_1 , \dots , \theta_m\}$. We assume that the utility associated with each pair $(a, \theta) \in \mathbb{A} \times \Theta$ is described by a real-valued \textit{cardinal utility function} $u:\mathbb{A} \times \Theta \to \mathbb{R}$, which is unique up to positive affine transformations. We write $u_{ij}:= u(a_i , \theta_j)$ for the utility obtained when act $a_i$ is chosen and state $\theta_j$ occurs.
\begin{table}
\caption{The basic model of finite decision theory.}
\begin{center}
\begin{tabular}{cccc}
\hline\noalign{\smallskip}
$u(a_i , \theta_j)$ & $\mathbf{\theta_1}$ & $\cdots$ & $\mathbf{\theta_m}$\\
\noalign{\smallskip}
\hline
\noalign{\smallskip}
$\mathbf{a_1}$ & $u(a_1 , \theta_1)$ &$\cdots$ & $u(a_1 , \theta_m)$  \\
$\vdots$ & $\vdots$ & $\cdots$ & $\vdots$  \\
$\mathbf{a_n}$ & $u(a_n , \theta_1)$ & $\cdots$ & $u(a_n , \theta_m)$  \\
\hline
\end{tabular}
\end{center}
\label{bm-table}
\end{table}
For every act $a \in \mathbb{A}$, the utility function $u$ naturally induces a random variable $u_a: (\Theta, 2^{\Theta}) \to \mathbb{R}$ defined by $u_a(\theta):= u(a, \theta)$ for all $\theta \in \Theta$.  The structure of this decision problem is illustrated in Table~\ref{bm-table}. Given a \textit{finite decision problem} $(\mathbb{A} , \Theta , u)$ of this form, the central task is to identify an optimal act $a^*\in \mathbb{A}$. In the classical framework, following \cite{vnm1944,s1954}, optimality is defined via expected utility maximization. This requires the decision maker to specify a probability measure $\pi$ on $(\Theta, 2^{\Theta})$ representing their beliefs about the occurrence of states of nature. The optimality criterion is then formally defined as follows:
\begin{definition}
Any act $a^* \in \mathbb{A}$ that receives maximal expected utility with respect to $\pi$, i.e.~for which $\mathbb{E}_{\pi}[u_{a^*}] \geq \mathbb{E}_{\pi}[u_{a}]$  for all other $a \in \mathbb{A}$, is called \textbf{Bayes-act with respect to $\pi$}. We denote by $\mathbb{A}_{\pi}$ the set of all such acts from $\mathbb{A}$.    
\end{definition}
\subsection{Challenges in the Choice of the Prior}
Although maximizing expected utility is intuitively appealing and theoretically well founded, it faces a key practical challenge: specifying the distribution $\pi$. While Savage's theorem \cite{s1954} guarantees the existence of such a distribution under suitable rationality axioms, this does not provide much guidance for its concrete determination. In practice, one often only has partial information, such as constraints on probabilities (e.g., $\pi(\theta_1) \leq \pi(\theta_2)$), rather than precise values (e.g., $\pi(\theta_1)=0.33$). Literature offers two main approaches to address this:
\\[.2cm]
\noindent \textbf{Credal set approach:} Following \cite{levi1974}, collect all probability measures on $(\Theta , 2^{\Theta})$ consistent with the given constraints in a \textit{credal set} $\mathcal{M}$. Define $\texttt{ag}: \mathbb{A} \to \mathbb{R}$ by $\texttt{ag}(a):=f(\{\mathbb{E}_{\pi}[u_{a}]: \pi \in \mathcal{M}\})$, where $f:2^{\mathbb{R}} \to \mathbb{R}$. Then, choose $a^* \in \text{argmax}_a~\texttt{ag}(a)$. Common choices include $f=\inf$ ($\Gamma$-minimax), $f=\sup$ ($\Gamma$-maximax), and $f=\eta \cdot \inf+(1-\eta)\cdot \sup$ ($\Gamma$-Maximix); see \cite{Troffaes.2007,jsa2018,festschrift}.
\\[.2cm]
\noindent \textbf{Regularization approach:} Fix a prior $\pi_0\in \mathcal{M}$ and a \textit{trust parameter} $0\leq \mu \leq 1$. Define $\texttt{rex}: \mathbb{A} \to \mathbb{R}$ by $\texttt{rex}(a):=\mu \cdot \mathbb{E}_{\pi}[u_{a}] +(1 - \mu) \cdot \min \{u(a, \theta): \theta \in \Theta\}$.  Then, choose $a^* \in \text{argmax}_a~\texttt{rex}(a)$. Originally proposed in \cite{hodges1952use} (with solution algorithms in \cite{meets}), this can be viewed as Tikhonov-type regularization, where the penalty term enforces a minimum utility level (see \cite[p.~397]{hodges1952use}).
\\[.2cm]
\noindent 
Both approaches are well established but have drawbacks. The credal set approach requires specifying an aggregation function $f$ modelling the decision maker's \textit{attitude towards ambiguity}, thereby introducing an additional modeling layer. Regularization depends on the choice of the trust parameter $\mu$, whose calibration directly influences the outcome and may lack a clear interpretation.
\\[.2cm]
\noindent \textbf{Stability-based robustness approach:} Rather than modifying the objective or replacing Bayesian reasoning, one can analyze the \emph{stability} of Bayes-optimal acts under perturbations of the prior. For a fixed $\pi_0$, we measure (i) how far $\pi_0$ can be perturbed while preserving optimality of $a$, and (ii) how much perturbation is needed for $a$ to become optimal. These notions, formalized below as robustness radius and contamination need, provide a sensitivity analysis within the EU framework. Unlike the credal set approach, this perspective avoids specifying a global ambiguity attitude. Unlike regularization, it does not alter the utility function. Instead, it evaluates how optimality behaves under controlled deviations from $\pi_0$, yielding a geometric view in which acts are compared by both expected utility and the size of the region of priors supporting their optimality.
\section{Measuring the Robustness of Bayes-Acts}\label{measuring}
In this section, we introduce a local stability analysis of Bayes-optimal decisions with respect to perturbations of the underlying prior distribution. The central idea is to quantify how sensitive the optimality of a given act is to small changes in probabilistic beliefs. This leads to two complementary notions: a robustness radius, which measures stability under perturbations of the prior, and a contamination need, which captures the minimal perturbation required for an act to become Bayes-optimal in some nearby model.
\\[.15cm]
We begin by formalizing the type of perturbations considered. Let $\Delta_{m-1}$ denote the probability simplex over the finite state space $(\Theta, 2^{\Theta})$, where $\Theta=\{\theta_1 , \dots , \theta_m\}$. For a fixed reference prior $\pi_0 \in \Delta_{m-1}$ and a perturbation level $\varepsilon \in [0,1]$, we define a neighborhood of admissible priors as follows:
\begin{equation*}
    B(\pi_0 , \varepsilon):= \Bigr\{\pi \in \Delta_{m-1}: \pi_0(\theta_j)-\varepsilon \leq \pi(\theta_j)\leq \pi_0(\theta_j)+\varepsilon \text{ for all }  j=1 , \dots,m\Bigl\}
\end{equation*}
This construction imposes a coordinate-wise $\ell_\infty$-type uncertainty band around the reference belief, allowing us to study robustness in a controlled and interpretable way. We now define the two central stability measures.
\begin{definition}\label{robustness}
Let $(\mathbb{A} , \Theta , u)$ be a finite decision problem, let $\pi_0 \in \Delta_{m-1}$, and let $a \in \mathbb{A}$ some fixed act. We define the following quantities:
\begin{itemize}\itemsep2mm
    \item[i)] The \textbf{robustness radius} of $a$ with respect to $\pi_0$ is defined by
    $$\emph{rob}(a,\pi_0):= \sup \big\{\varepsilon: a \in \mathbb{A}_{\pi} \emph{ for all } \pi \in  B(\pi_0 , \varepsilon)\big\},$$
    where we use the convention that $\sup \emptyset = -\infty$.
    \item[ii)] The \textbf{contamination need} of $a$ with respect to $\pi_0$ is defined by
    $$\emph{con}(a,\pi_0):= \inf \big\{\varepsilon: a \in \mathbb{A}_{\pi} \emph{ for some } \pi \in  B(\pi_0 , \varepsilon)\big\},$$
    where we use the convention that $\inf \emptyset =+ \infty$.
\end{itemize}
\end{definition}
Intuitively, the robustness radius captures a worst-case notion of stability: it is the largest perturbation radius within which the act remains optimal under all admissible belief perturbations. Therefore, the robustness radius can be understood as an adaption of the notion of a breakdown point known from robust statistics, (e.g., \cite{hampel,hodges,donoho_huber}) to the setting of a selector of an optimal decision in decision theory.  In contrast, the contamination need captures an existential notion: the smallest perturbation required so that the act becomes optimal for at least one nearby prior. These two quantities therefore provide complementary perspectives on the stability landscape of Bayes-optimality.

\begin{proposition}
In the situation of Definition~\ref{robustness} we have that:
\begin{itemize}
    \item[i)] $\text{rob}(a,\pi_0)=-\infty$ if and only if $a \notin \mathbb{A}_{\pi_0}$.
    \item[ii)]  $\text{con}(a,\pi_0)=+\infty$ if and only if $a$ is \textit{strictly inadmissible}, i.e., if there exist weights $(\beta_b)_{b \in \mathbb{A}\setminus \{a\}}\ge 0$ with $\sum\nolimits_{b \ne a}\beta_b=1$ such that $u(a,\theta_j) < \sum\nolimits_{b\ne a}\beta_b\cdot u(b,\theta_j)$ for all $j=1 , \dots , m$.
\end{itemize}
\end{proposition}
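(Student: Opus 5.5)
Both parts follow directly from the definitions. The only real ingredients are that $B(\pi_0,\varepsilon)$ sits inside the simplex and is nested and monotone in $\varepsilon$, and, for part ii), a linear-programming duality (minimax) argument.

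\emph{Part i).} Since $\varepsilon$ ranges over $[0,1]$ and $B(\pi_0,0)=\{\pi_0\}$, the set $\{\varepsilon: a\in\mathbb{A}_\pi \text{ for all } \pi\in B(\pi_0,\varepsilon)\}$ contains $0$ exactly when $a\in\mathbb{A}_{\pi_0}$.
\begin{itemize}
\item If $a\in\mathbb{A}_{\pi_0}$, the set is nonempty, so $\text{rob}(a,\pi_0)\ge 0>-\infty$.
\item If $a\notin\mathbb{A}_{\pi_0}$, then every $B(\pi_0,\varepsilon)$ contains $\pi_0$. Hence no $\varepsilon$ qualifies, the set is empty, and the convention $\sup\emptyset=-\infty$ gives the claim.
\end{itemize}

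\emph{Part ii), reduction.} For $\varepsilon\ge 1$ we have $B(\pi_0,\varepsilon)=\Delta_{m-1}$, since all coordinates lie in $[0,1]$. The balls are increasing in $\varepsilon$. So the defining set of $\text{con}(a,\pi_0)$ is empty if and only if $a\notin\mathbb{A}_\pi$ for every $\pi\in\Delta_{m-1}$. (If $\varepsilon$ is restricted to $[0,1]$, the ball $B(\pi_0,1)=\Delta_{m-1}$ already gives the same reduction.)

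It therefore remains to show that $a$ is a Bayes act for no prior if and only if $a$ is strictly inadmissible in the stated sense.

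\emph{Part ii), the easy direction.} Suppose weights $\beta$ exist with $u_a<\sum_b\beta_b u_b$ pointwise. Fix any $\pi\in\Delta_{m-1}$. Integrating gives $\mathbb{E}_\pi[u_a]<\sum_b\beta_b\mathbb{E}_\pi[u_b]\le\max_{b\ne a}\mathbb{E}_\pi[u_b]$, so $a\notin\mathbb{A}_\pi$.

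\emph{Part ii), the converse (main obstacle).} Here I would use a minimax theorem for the finite zero-sum game with payoff matrix $M_{jb}=u(b,\theta_j)-u(a,\theta_j)$, $b\ne a$. The hypothesis says that for every $\pi$ some $b$ satisfies $\sum_j\pi_j M_{jb}>0$. Hence $\min_{\pi\in\Delta_{m-1}}\max_{\beta}\pi^\top M\beta>0$; the minimum is attained because the function is continuous on a compact set. Von Neumann's minimax theorem (equivalently, LP duality or Gordan's/Ville's alternative theorem) gives $\max_\beta\min_\pi \pi^\top M\beta=v>0$. Taking the optimal $\beta$ and $\pi=\delta_{\theta_j}$ yields $\sum_b\beta_b u(b,\theta_j)-u(a,\theta_j)\ge v>0$ for every $j$, which is exactly strict inadmissibility.

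The one case to handle separately is $\mathbb{A}=\{a\}$. There no weights exist, and $a$ is trivially Bayes everywhere, so both sides of the equivalence are false and it holds.
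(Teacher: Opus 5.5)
Your proof is correct and follows essentially the same route as the paper. Part i) comes from $\pi_0\in B(\pi_0,\varepsilon)$, and part ii) reduces to ``$a$ is Bayes for no prior in $\Delta_{m-1}$'' and then applies a theorem of the alternative. There your von Neumann minimax step plays the role of the paper's Farkas-lemma argument; both are forms of LP duality. You are also slightly more careful than the paper, making explicit that $B(\pi_0,1)=\Delta_{m-1}$ and handling the degenerate case $\mathbb{A}=\{a\}$, both of which the paper leaves implicit.
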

%
\iffalse
\begin{proof}
\textit{i)} is immediate. We show \textit{ii)}, i.e.,  $\mathrm{con}(a,\pi_0)=+\infty$ $\Rightarrow$ $a$ is inadmissible.
For this, suppose $\mathrm{con}(a,\pi_0)=+\infty$. By definition, there is no prior $\pi \in \Delta_{m-1}$ such that $a \in \mathbb{A}_\pi$. Hence, for every $\pi \in \Delta_{m-1}$ there exists $b \neq a$ with $\mathbb{E}_\pi[u_b] > \mathbb{E}_\pi[u_a]$, i.e.,
$\sum_{j=1}^m \pi(\theta_j)\bigl(u(a,\theta_j)-u(b,\theta_j)\bigr) < 0$. Define $d^{(b)} \in \mathbb{R}^m$ by $d^{(b)}_j := u(a,\theta_j)-u(b,\theta_j).$ 
Then for all $\pi \in \Delta_{m-1}$ we have $\min_{b \neq a} \langle \pi, d^{(b)} \rangle < 0$. By Farkas' lemma, this implies the existence of weights $(\beta_b)_{b \ne a}$ with $\beta_b \ge 0$ and $\sum_{b \ne a} \beta_b = 1$ such that $\sum_{b \ne a} \beta_b d^{(b)} \le 0$ componentwise, with strict inequality in at least one coordinate. Expanding yields
$u(a,\theta_j) \le \sum_{b \ne a} \beta_b u(b,\theta_j)$ for all $j$, with strict $<$ for some $j$, hence $a$ is inadmissible. \hfill $\square$
\end{proof}
\fi
%
\begin{proof}
\textit{i)} is immediate. We show \textit{ii)}.

\smallskip
\noindent
``$\Rightarrow$'': Suppose $\mathrm{con}(a,\pi_0)=+\infty$. Then there is no prior $\pi \in \Delta_{m-1}$ such that $a \in \mathbb{A}_\pi$. Writing $\mathbb{A}\setminus\{a\}=\{b_1,\dots,b_k\}$, define $D \in \mathbb{R}^{k\times m}$ by
$D_{ij}:=u(a,\theta_j)-u(b_i,\theta_j)$. Identifying $\pi \in \Delta_{m-1}$ with a vector $\pi \ge 0$ satisfying $\mathbf 1^\top \pi =1$, the condition $a \in \mathbb{A}_\pi$ is equivalent to $D\pi \ge 0$. Hence the linear equation system
$D\pi \ge 0,\pi \ge 0,\mathbf 1^\top \pi =1$ has no solution.
Introducing slack variables $s\ge0$, this is equivalent to infeasibility of the system
$D\pi-s=0,
\mathbf 1^\top \pi =1,
\pi \ge0,
s\ge0.$
By Farkas' lemma (e.g., \cite[Section 5.8.3]{b2004}), there therefore exist $\beta \le 0$ and $\lambda<0$ such that
$\beta^\top D+\lambda \mathbf 1^\top \ge0,$ and thus $\tilde\beta^\top D < 0$, where
$\tilde\beta:=-\beta /\sum_i \beta_i$. Thus, we have $\sum_i \tilde\beta_i=1$ and $\tilde \beta_i \ge0$ and  the inequality gives
$u(a,\theta_j)
<
\sum\nolimits_{i=1}^k \tilde\beta_i\,u(b_i,\theta_j)$ for all  $j=1,\dots,m$.
So, $a$ is strictly inadmissible.

\smallskip
\noindent
``$\Leftarrow$'': Suppose there exists  $(\beta_b)_{b\ne a}$ with $\beta_b\ge0$ and $\sum_{b\ne a}\beta_b=1$ such that
$
u(a,\theta_j)
<
\sum_{b\ne a}\beta_b\,u(b,\theta_j)$ for all $j$.
Multiplying by $\pi(\theta_j)$ and summing over $j$ yields $\mathbb E_\pi[u_a]
<
\sum\nolimits_{b\ne a}\beta_b\,\mathbb E_\pi[u_b]$
for all $\pi \in \Delta_{m-1}$. Hence for every $\pi$ there exists $b\ne a$ with $\mathbb E_\pi[u_b]>\mathbb E_\pi[u_a]$, so $a\notin\mathbb A_\pi$ for all $\pi$. Therefore $\mathrm{con}(a,\pi_0)=+\infty$.
\hfill$\square$
\end{proof}
We next show that both quantities are characterized via linear optimization. This is crucial for computation, as it allows us to exploit efficient solvers and monotonicity properties in the perturbation parameter. The proofs for the following Propositions \ref{V}, \ref{R}, and \ref{mono} are straightforward and therefore omitted.
\begin{proposition} \label{V}
Let $(\mathbb{A} , \Theta , u)$ be a finite decision problem, let $\pi_0 \in \Delta_{m-1}$, and let $a \in \mathbb{A}$ some fixed admissible act. Then, the contamination need of $a$ with respect to $\pi_0$ can be computed via the following linear program:
\begin{equation*}
con(a,\pi_0)=
\min_{\pi,\varepsilon}
\left\{
 \varepsilon :
\begin{array}{l}
\pi_j \ge 0 \quad \forall j \\[4pt]
\sum_{j=1}^m \pi_j = 1 \\[4pt]
\pi_{0j} - \varepsilon \le \pi_j \le \pi_{0j} + \varepsilon \quad \forall j \\[4pt]
\sum_{j=1}^m \pi_j \big(u(a,\theta_j) - u(b,\theta_j)\big) \ge 0, \quad \forall b \ne a
\end{array}
\right\}
\end{equation*}
%
%Then, the contamination need of $a$ with respect to $\pi_0$ can be computed via:
%\begin{equation*}
%\mathrm{con}(a,\pi_0)
%=
%\inf \left\{ \varepsilon \ge 0 : V(\varepsilon) = 1 \right\}
%\end{equation*} 
\end{proposition}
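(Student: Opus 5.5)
The plan is to show that the feasible set of the linear program, projected onto the $\varepsilon$-coordinate, coincides with the set $E:=\{\varepsilon: a\in\mathbb{A}_\pi \text{ for some } \pi\in B(\pi_0,\varepsilon)\}$ from Definition~\ref{robustness}\,ii). Once this is done, we show that the minimum of the program is attained, so that the LP value equals $\inf E=\mathrm{con}(a,\pi_0)$.

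\emph{Step 1: reformulating the constraints.} Fix a pair $(\pi,\varepsilon)$. The first two constraints say that $\pi\in\Delta_{m-1}$, and the box constraints say that $\pi\in B(\pi_0,\varepsilon)$. By linearity of expectation, the last family of constraints reads $\mathbb{E}_\pi[u_a]-\mathbb{E}_\pi[u_b]\ge 0$ for all $b\ne a$, which is exactly $a\in\mathbb{A}_\pi$. Hence $(\pi,\varepsilon)$ is feasible if and only if $\pi\in B(\pi_0,\varepsilon)$ and $a\in\mathbb{A}_\pi$. Projecting onto $\varepsilon$ gives precisely the set $E$, so the infimum of the LP equals $\mathrm{con}(a,\pi_0)$. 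Note that $\varepsilon\ge 0$ is implied by the box constraints, since $\pi_{0j}-\varepsilon\le\pi_j\le\pi_{0j}+\varepsilon$ forces $\varepsilon\ge 0$. Values $\varepsilon>1$ are harmless, because then $B(\pi_0,\varepsilon)=\Delta_{m-1}$.

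\emph{Step 2: nonemptiness and attainment.} This is the only step needing care. Admissibility of $a$ is used here. I read it as "not strictly inadmissible", in line with the preceding proposition; by ii) of that proposition, $\mathrm{con}(a,\pi_0)<+\infty$. Thus there is some $\pi^\ast\in\Delta_{m-1}$ with $a\in\mathbb{A}_{\pi^\ast}$. Taking $\varepsilon=1$, the pair $(\pi^\ast,1)$ is feasible, so the feasible set is nonempty. Alternatively, if admissibility is meant in the weak sense, weak admissibility still implies the existence of a prior making $a$ Bayes, via the same Farkas argument, and the conclusion is the same.

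The objective $\varepsilon$ is bounded below by $0$ on the feasible set. The feasible set is a nonempty polyhedron, and a linear objective that is bounded below on a nonempty polyhedron attains its minimum, by the fundamental theorem of linear programming. Hence the "$\min$" is justified.

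For a more elementary route to attainment, one can restrict to $\varepsilon\in[0,1]$ without loss of generality. The feasible set is then compact, being closed and bounded, and the continuous objective attains its minimum on it.

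Combining Steps 1 and 2, the optimal value of the program equals $\inf E=\mathrm{con}(a,\pi_0)$, which proves the proposition.
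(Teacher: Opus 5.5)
Your proof is correct, and it is the routine verification the paper omits as ``straightforward'': the feasible set of the LP projects exactly onto $\{\varepsilon : a\in\mathbb{A}_\pi \text{ for some } \pi\in B(\pi_0,\varepsilon)\}$, and admissibility (via part ii) of the preceding proposition) makes the LP feasible, so the minimum is attained. You also note that $\varepsilon\ge 0$ is forced by the box constraints and that $B(\pi_0,\varepsilon)=\Delta_{m-1}$ for $\varepsilon\ge 1$, which removes the only small mismatches between the LP and the definition.
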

A similar LP characterization holds for the robustness radius.
\begin{proposition}\label{R}
Let $(\mathbb{A} , \Theta , u)$ be a finite decision problem, let $\pi_0 \in \Delta_{m-1}$, and let $a \ne b \in \mathbb{A}$ with $a \in \mathbb{A}_{\pi_0}$. For $\varepsilon \geq 0$, consider the LP:
\begin{equation*}
R_{a,b}(\varepsilon)
:=
\min_{\pi}
\left\{
\sum_{j=1}^m \pi_j (u(a,\theta_j)-u(b,\theta_j))
: \begin{array}{l}
\pi_j \ge 0 \quad \forall j \\[4pt]
\sum_{j=1}^m \pi_j = 1 \\[4pt]
\pi_{0j} - \varepsilon \le \pi_j \le \pi_{0j} + \varepsilon \quad \forall j 
\end{array}
\right\}
\end{equation*}
Then, setting $R(\varepsilon):=\min_{b \ne a} R_{a,b}(\varepsilon)$, the robustness radius of $a$ with respect to $\pi_0$ can be computed via:
\begin{equation*}
\mathrm{rob}(a,\pi_0)
=
\sup \left\{ \varepsilon \ge 0 : R(\varepsilon) \ge 0 \right\}
\end{equation*}
\end{proposition}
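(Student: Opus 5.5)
The plan is to rewrite the defining condition of $\mathrm{rob}(a,\pi_0)$ pointwise in $\varepsilon$, showing that the two sets over which the suprema are taken coincide.

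For fixed $\varepsilon \ge 0$, the statement ``$a \in \mathbb{A}_\pi$ for all $\pi \in B(\pi_0,\varepsilon)$'' unpacks as
$$\sum_{j=1}^m \pi_j\bigl(u(a,\theta_j)-u(b,\theta_j)\bigr)\ge 0 \quad\text{for all } b\ne a \text{ and all } \pi\in B(\pi_0,\varepsilon).$$
The quantifiers over $b$ and $\pi$ may be swapped freely. Hence the condition holds if and only if, for every $b\ne a$, the infimum over $B(\pi_0,\varepsilon)$ of the linear functional $\pi\mapsto \sum_j \pi_j(u(a,\theta_j)-u(b,\theta_j))$ is nonnegative. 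Taking the minimum over the finitely many $b \ne a$, this is exactly $R(\varepsilon)\ge 0$.

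Two technical points need checking, and I expect them to be the only places requiring care.

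\textbf{The LP attains its minimum.} The feasible set of $R_{a,b}(\varepsilon)$ is precisely $B(\pi_0,\varepsilon)$, written with $\pi$ as a vector. This set is a polytope, being the intersection of the simplex with a box. It is nonempty, since $\pi_0$ itself is feasible for every $\varepsilon\ge 0$. It is compact. A linear objective on a nonempty compact polytope attains its minimum, so the infimum equals the minimum and $R_{a,b}(\varepsilon)$ is a well-defined finite number.

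\textbf{The ranges of $\varepsilon$ match.} The definition of $\mathrm{rob}$ takes its supremum over $\varepsilon$ without an explicit lower bound (implicitly $\varepsilon\in[0,1]$ or $\varepsilon \ge 0$), whereas the proposition uses $\varepsilon\ge 0$.
\begin{itemize}
\item Negative $\varepsilon$ makes $B(\pi_0,\varepsilon)$ empty. The condition then holds vacuously, but such values are irrelevant here: because $a\in\mathbb{A}_{\pi_0}$, the value $\varepsilon=0$ already belongs to the set (since $B(\pi_0,0)=\{\pi_0\}$, we get $R(0)=\min_b \mathbb{E}_{\pi_0}[u_a-u_b]\ge 0$). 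So the supremum is unaffected by restricting to $\varepsilon\ge 0$.
\item For $\varepsilon\ge 1$, the ball $B(\pi_0,\varepsilon)$ equals the whole simplex $\Delta_{m-1}$. So whether the range is capped at $1$ or left unbounded changes nothing beyond the common convention, and both formulations give the same supremum.
\end{itemize}

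With the sets $\{\varepsilon\ge 0: a\in\mathbb{A}_\pi \ \forall \pi\in B(\pi_0,\varepsilon)\}$ and $\{\varepsilon\ge0: R(\varepsilon)\ge 0\}$ shown to coincide, their suprema agree, which proves the formula. As a side remark useful for the later bisection, $B(\pi_0,\varepsilon)$ is increasing in $\varepsilon$. Hence $R$ is nonincreasing, and this set is an interval containing $0$.
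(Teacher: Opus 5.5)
Your main argument is correct. The paper omits this proof as straightforward, and yours is the natural argument it has in mind. The feasible set of each $R_{a,b}(\varepsilon)$ is exactly $B(\pi_0,\varepsilon)$, so $R(\varepsilon)\ge 0$ is a verbatim restatement of ``$a\in\mathbb{A}_\pi$ for all $\pi\in B(\pi_0,\varepsilon)$'', and the two sets of admissible $\varepsilon$ coincide. Your treatment of negative $\varepsilon$, using $a\in\mathbb{A}_{\pi_0}$, is also fine.

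One claim in your range discussion is wrong: capping $\varepsilon$ at $1$ versus leaving it unbounded does not always give the same supremum. The valid set is downward closed, and $B(\pi_0,\varepsilon)=\Delta_{m-1}$ for all $\varepsilon\ge 1$. So the two versions differ precisely when $a$ is Bayes-optimal for every prior, i.e.\ when $u(a,\theta_j)\ge u(b,\theta_j)$ for all $j$ and all $b$. In that case the capped supremum is $1$, while $\sup\{\varepsilon\ge0:R(\varepsilon)\ge0\}=+\infty$.

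Hence the identity holds as stated only if the definition of $\mathrm{rob}$ is read with $\varepsilon$ ranging over $[0,\infty)$. The paper's text, however, introduces $B(\pi_0,\varepsilon)$ for $\varepsilon\in[0,1]$, and the bisection initialized with $\varepsilon_U=1$ relies on this. Under that reading, the correct statement is $\mathrm{rob}(a,\pi_0)=\min\{1,\sup\{\varepsilon\ge 0:R(\varepsilon)\ge 0\}\}$, with equality to the uncapped formula whenever $\mathrm{rob}(a,\pi_0)<1$. You should state this case distinction explicitly rather than asserting that the two suprema agree.
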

This formulation reveals an important structure:
\begin{proposition}\label{mono}
Under the same conditions as in Propositions \ref{R}, the function 
 $\varepsilon \mapsto R(\varepsilon)$ is non-increasing.
%
%\begin{itemize}
%\item[(i)] $V(\varepsilon)$ is non-decreasing with $0 \le V(\varepsilon) \le 1$.
%
%\item[(ii)] $R(\varepsilon)$ is non-increasing.
%\end{itemize}
%Thus, both $\mathrm{con}(a,\pi_0)$ and $\mathrm{rob}(a,\pi_0)$ can be computed via bisection on $\varepsilon$.
\end{proposition}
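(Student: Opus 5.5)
The argument is a nesting of feasible sets. For $0 \le \varepsilon \le \varepsilon'$, write $F(\varepsilon)$ for the feasible region of the LP defining $R_{a,b}(\varepsilon)$ in Proposition~\ref{R}. Thus $F(\varepsilon)$ is the set of $\pi \in \mathbb{R}^m$ with $\pi \ge 0$, $\sum_j \pi_j = 1$ and $\pi_{0j}-\varepsilon \le \pi_j \le \pi_{0j}+\varepsilon$ for all $j$; equivalently, $F(\varepsilon) = B(\pi_0,\varepsilon)$. This region does not depend on $b$.

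The first step is the inclusion $F(\varepsilon) \subseteq F(\varepsilon')$. If $\pi \in F(\varepsilon)$, then $\pi_{0j}-\varepsilon' \le \pi_{0j}-\varepsilon \le \pi_j \le \pi_{0j}+\varepsilon \le \pi_{0j}+\varepsilon'$ for every $j$. The nonnegativity and normalization constraints are unaffected by $\varepsilon$.

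The second step checks that the minima exist and are finite. Each $F(\varepsilon)$ is a nonempty compact polytope: it is nonempty because $\pi_0 \in F(\varepsilon)$ for every $\varepsilon \ge 0$, and it is compact because it lies in the simplex. The objective $\pi \mapsto \sum_j \pi_j(u(a,\theta_j)-u(b,\theta_j))$ is linear, hence continuous. So each $R_{a,b}(\varepsilon)$ is a finite minimum that is attained.

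The third step is the monotonicity itself. Minimizing the same objective over a larger set can only lower the value, so $R_{a,b}(\varepsilon') \le R_{a,b}(\varepsilon)$ for every $b \neq a$. Since $\mathbb{A}$ is finite, $R(\varepsilon) = \min_{b \ne a} R_{a,b}(\varepsilon)$ is a minimum of finitely many non-increasing functions. Such a pointwise minimum is again non-increasing. Concretely, pick $b^*$ attaining $R(\varepsilon)$; then $R(\varepsilon') \le R_{a,b^*}(\varepsilon') \le R_{a,b^*}(\varepsilon) = R(\varepsilon)$.

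No step is a real obstacle; the only points needing care are that the feasible region is nonempty (witnessed by $\pi_0$) and that it is the same set for every $b$. The hypothesis $a \in \mathbb{A}_{\pi_0}$ is not needed for monotonicity. It only gives $R(0) \ge 0$, so the supremum in Proposition~\ref{R} is taken over a nonempty set.
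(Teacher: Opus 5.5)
Your proof is correct. The paper omits this proof as straightforward, and your argument is essentially the intended one: the feasible regions $B(\pi_0,\varepsilon)\subseteq B(\pi_0,\varepsilon')$ are nested, do not depend on $b$, and you then take the minimum over the finitely many $b\neq a$. Your side remarks are also accurate: $\pi_0$ witnesses nonemptiness, and $a\in\mathbb{A}_{\pi_0}$ is not needed for monotonicity, only to guarantee $R(0)\ge 0$.
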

With this, for a given act, we can compute the contamination need directly via linear programming and the robustness radius via bi-section by making use of Propositions \ref{R} and \ref{mono}. Pseudo-code for this is given in the following scheme. 
%
%\begin{figure}
% \begin{algorithm}[H]
% \caption{Bisection algorithm for the contamination need}
% \label{alg:contamination_bisection}

% \begin{algorithmic}[1]
% \State Input: $(\mathbb{A},\Theta,u)$, prior $\pi_0$, strictly admissible act $a$, tolerance $\delta$
% \State Initialize $\varepsilon_L \gets 0$, $\varepsilon_U \gets 1$

% \While{$\varepsilon_U - \varepsilon_L > \delta$}
%     \State $\varepsilon \gets (\varepsilon_L + \varepsilon_U)/2$
%     \State Compute $V(\varepsilon)$ via LP
%     \If{$V(\varepsilon) = 1$}
%         \State $\varepsilon_U \gets \varepsilon$
%     \Else
%         \State $\varepsilon_L \gets \varepsilon$
%     \EndIf
% \EndWhile

% \State \Return $\varepsilon_L$
% \end{algorithmic}
% \end{algorithm}
%    \caption{Bisection algorithm for the contamination need of an admissible act $a$.}
 %   \label{algo1}
%\end{figure}
%
%\begin{figure}
\begin{algorithm}[H]
\caption{Bisection algorithm for the robustness radius}
\label{alg:robustness_bisection}

\begin{algorithmic}[1]
\State \textbf{Input:} decision problem $(\mathbb{A},\Theta,u)$, prior $\pi_0$, act $a \in \mathbb{A}_{\pi_0}$, tolerance $\delta > 0$
\State \textbf{Initialize:} $\varepsilon_L \leftarrow 0$, $\varepsilon_U \leftarrow 1$
\State \textbf{Define:} $R(\varepsilon)$ as in Proposition above

\While{$\varepsilon_U - \varepsilon_L > \delta$}
    \State $\varepsilon \leftarrow (\varepsilon_L + \varepsilon_U)/2$
    \State Compute $R(\varepsilon)$ by solving $R_{a,b}(\varepsilon)$ for all $b \ne a$
    \If{$R(\varepsilon) \ge 0$}
        \State $\varepsilon_L \leftarrow \varepsilon$
    \Else
        \State $\varepsilon_U \leftarrow \varepsilon$
    \EndIf
\EndWhile

\State \Return $\varepsilon_L$
\end{algorithmic}
\end{algorithm}
%    \caption{Bisection algorithm for the robustness radius of an $\pi_0$-optimal act $a$.}
 %   \label{algo1}
%\end{figure}
%
\noindent Finally, we combine robustness radius, contamination need and externally given \textit{selection costs} for each candidate act $a \in \mathbb{A}$. In this way, we arrive at a decision criterion that is able to trade-off between acts that are close to optimal but have lower selection costs and acts that are fragile in their optimality but have higher selection costs. Examples for externally given selection costs can range from complexity penalties in, e.g.,  LASSO-regression to variance/volatility penalties in problems of mathematical finance (see our later application in Section \ref{portfolio}). 
\begin{definition}
Let $(\mathbb{A},\Theta,u)$ be a finite decision problem, let $\pi_0 \in \Delta_{m-1}$, and let $c:\mathbb{A}\to \mathbb{R}_{\ge 0}$ be a bounded selection cost function. Fix a weight parameter $\lambda \ge 0$. Define the \textbf{cost-adjusted stability score} by
\[
S_{\lambda,\pi_0}(a)
:=
\begin{cases}
\frac{\mathrm{rob}(a,\pi_0)}{\max_{a' \in \mathbb{A}} \mathrm{rob}(a',\pi_0)} - \lambda \,\frac{c(a)}{\max_{a' \in \mathbb{A}} c(a')},
& \text{if } a \in \mathbb{A}_{\pi_0}, \\[6pt]
- \frac{\mathrm{con}(a,\pi_0)}{\max_{a' \in \mathbb{A}} \mathrm{con}(a',\pi_0)} - \lambda \,\frac{c(a)}{\max_{a' \in \mathbb{A}} c(a')},
& \text{if } a \notin \mathbb{A}_{\pi_0}
\end{cases}
\]
A \textbf{cost-adjusted stability optimal act} is any
\[
a^* \in \arg\max_{a \in \mathbb{A}} S_{\lambda,\pi_0}(a).
\]
\end{definition}
%
%This criterion induces a smooth trade-off between robustness to model misspecification and selection cost, and will be analyzed empirically in the portfolio application in the following section.
%
\section{Portfolio Selection under Model Uncertainty}\label{portfolio}

We illustrate the proposed cost-adjusted stability criterion using financial return data, interpreting the problem as decision making under uncertainty. Rather than relying on historical observations directly, we use them to approximate the payouts under a finite set of economically meaningful scenarios. The uncertainty is then between these scenarios.
\\[.15cm]
\noindent \textbf{Constructing the action space.}
We consider $|\mathbb{A}|=6$ portfolio strategies constructed from exchange-traded funds (ETF's) representing major asset classes, including equities, bonds, commodities, real estate, and international equity. The underlying dataset consists of monthly log-returns over the period 2015--2024. ETF price data are obtained from Yahoo Finance via the \texttt{quantmod} R package \cite{yahoofinance2024, quantmod}, and portfolio construction follows standard asset-pricing methodology in the spirit of canonical factor and portfolio-sorting approaches \cite{frenchdata}. The six portfolios are defined as fixed convex combinations of the underlying ETF assets, representing heterogeneous investment strategies with distinct risk and return profiles. Specifically, we consider (i) an equity-core portfolio, (ii) a balanced equity portfolio, (iii) a diversified multi-asset portfolio, (iv) a bond-dominant portfolio, (v) a real-asset–tilted portfolio, and (vi) an equally weighted portfolio. The corresponding portfolio weights are reported in Table~\ref{tab:portfolio_weights}. %This construction ensures systematic variation in exposure across major asset classes while maintaining interpretability and full comparability across acts.
\\[.15cm]
\noindent \textbf{Constructing the state space.} To obtain a tractable representation of economic uncertainty, we define the state space as a finite set of four regimes:
\[
\Theta = \{\theta_1,\theta_2,\theta_3,\theta_4\}
=
\{\text{Expansion}, \text{Recession}, \text{Stagnation}, \text{Recovery}\}.
\]
The returns under these regimes are approximated from the historical return data using $k$-means clustering in the joint space of broad market performance and risk. Specifically, for each month $t$, we compute two features: (i) the market return $r_t^{M}$, proxied by the return on the S\&P 500 ETF (SPY), and (ii) a measure of market uncertainty given by realized volatility $\sigma_t$, computed as the rolling standard deviation of daily SPY returns within month $t$. Each monthly observation is therefore represented by a two-dimensional feature vector $(r_t^{M}, \sigma_t)$. We then apply $k$-means with $k=4$ these data to partition observations into four groups. The number of clusters is fixed to $k=4$ to align with the economic interpretation of the state space. Each cluster is interpreted ex post by inspecting its centroid in the $(r_t^{M}, \sigma_t)$ space:
clusters with high average returns and low volatility are labeled as expansions, clusters with low or negative returns and high volatility are labeled as recessions, and intermediate clusters are interpreted as stagnation or recovery depending on the relative ordering of returns and volatility. This induces a partition $\{T_1,\dots,T_4\}$ of the sample, where each $T_j$ corresponds to the set of time periods assigned to cluster $j$. Each observation is assigned to the regime whose centroid is closest in Euclidean distance. 
%This mapping transforms the continuous return-volatility process into a finite state representation of macro-financial conditions, which serves as the basis for the state space in the decision model.
%
\begin{table}[h!]
\centering
\caption{Portfolio construction via ETF weight allocations}
\label{tab:portfolio_weights}
\begin{tabular}{lccccc}
\hline
 & Equity & Bonds & Commodities & Real Estate & Intl. Equity \\
\hline
$a_1$ (Equity Core)        & 0.70 & 0.10 & 0.05 & 0.10 & 0.05 \\
$a_2$ (Balanced Equity)    & 0.50 & 0.20 & 0.05 & 0.15 & 0.10 \\
$a_3$ (Multi-Asset)        & 0.30 & 0.30 & 0.10 & 0.20 & 0.10 \\
$a_4$ (Bond Dominant)      & 0.15 & 0.70 & 0.05 & 0.05 & 0.05 \\
$a_5$ (Real-Asset Tilt)    & 0.25 & 0.15 & 0.25 & 0.30 & 0.05 \\
$a_6$ (Equal-Weight)       & 0.20 & 0.20 & 0.20 & 0.20 & 0.20 \\
\hline
\end{tabular}
\end{table}
\\[.15cm]
\noindent \textbf{Constructing the utility function.}
The utility of portfolio $a \in \mathbb{A}$ in state $\theta_j$ is defined as the conditional mean return:
\[
u(a,\theta_j)
:=
\tfrac{1}{|T_j|}
\sum\nolimits_{t \in T_j} r_t(a),
\]
where $T_j$ denotes the set of time periods classified into regime $\theta_j$, and $r_t(a)$ is the realized return of portfolio $a$ at time $t$. This yields a utility matrix $u \in \mathbb{R}^{6 \times 4}$, summarizing portfolio performance across macro-financial environments rather than individual time periods, see Table~\ref{tab:utility_matrix}.
\begin{table}[h!]
\centering
\caption{Utility matrix $u(a,\theta)$ across economic regimes}
\label{tab:utility_matrix}
\begin{tabular}{lcccc}
\hline
 & Expansion & Recovery & Stagnation & Recession \\
\hline
$a_1$ & 0.021 & 0.059 & -0.011 & -0.049 \\
$a_2$ & 0.018 & 0.051 & -0.011 & -0.037 \\
$a_3$ & 0.015 & 0.051 & -0.011 & -0.029 \\
$a_4$ & 0.006 & 0.024 & -0.007 & -0.024 \\
$a_5$ & 0.014 & 0.042 & -0.008 & -0.045 \\
$a_6$ & 0.015 & 0.041 & -0.009 & -0.024 \\
\hline
\end{tabular}
\end{table}
\\[.15cm]
\noindent \textbf{Specification of prior beliefs.}
To capture heterogeneous and economically interpretable beliefs about the four considered economic regimes, we consider a collection of eight prior distributions on the state space $\Theta$.
Each prior \(\pi_0 \in \Delta_3\) represents a stylized belief about the likelihood of regimes and is constructed to reflect distinct economic narratives. First, we include four \emph{regime-focused} priors that place the largest mass on a single state while maintaining strictly positive probability on all others: a boom-biased prior (Expansion), a recession-focused prior (Recession), a stagnation-oriented prior (Stagnation), and a recovery-focused prior (Recovery). Second, we consider four \emph{balanced and mixed} specifications: a uniform prior representing complete agnosticism, a growth-tilted prior that moderately favors expansions, a defensive prior emphasizing recession risk, and an opportunistic prior assigning weight to both expansion and stagnation scenarios. Figure~\ref{robvscont} visualizes the contamination need and the robustness radius of the different portfolios under all prior specifications.
\\[.15cm]
\noindent \textbf{Cost-adjusted stability acts and selection paths.}
Each portfolio is assigned a cost reflecting its risk exposure. For $a \in \mathbb{A}$, we define $c(a) := \mathrm{Var}(u_a)$, i.e.,
the variance of returns across regimes. Costs are normalized to lie in $[0,1]$, ensuring comparability across assets. We evaluate portfolios using the cost-adjusted stability score $S_{\lambda,\pi_0}(a)$, over a grid of parameters $0 \le\lambda \le 3$. For each $\lambda$, the selected portfolio is $a^* \in \arg\max_{a \in \mathbb{A}} S_{\lambda,\pi_0}(a)$. The selection paths for the portfolios along the different choices of $0 \le\lambda \le 3$ are visualized in Figure~\ref{selection}.
\begin{figure}
    \centering
    \includegraphics[width=\linewidth]{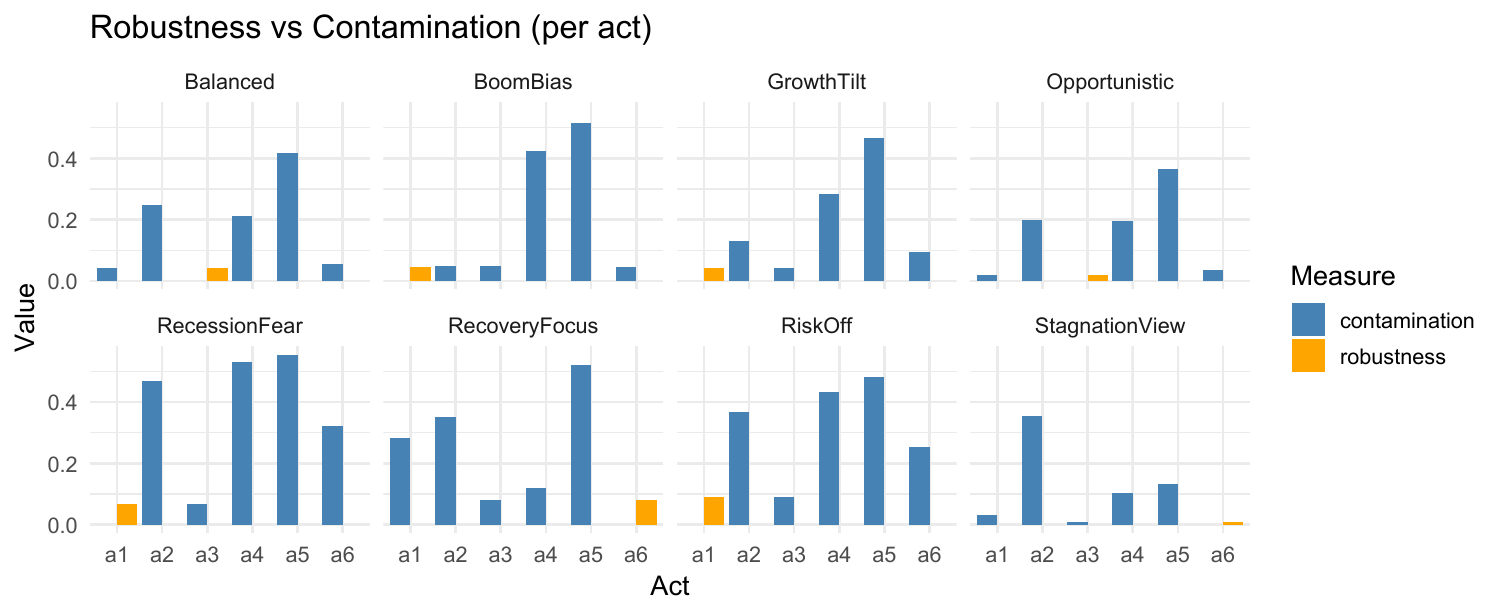}
    \caption{Contamination need and the robustness radius of the different portfolios $a_1 , \dots , a_6$ under all prior specifications.}
    \label{robvscont}
\end{figure}
\begin{figure}
    \centering
    \includegraphics[width=\linewidth]{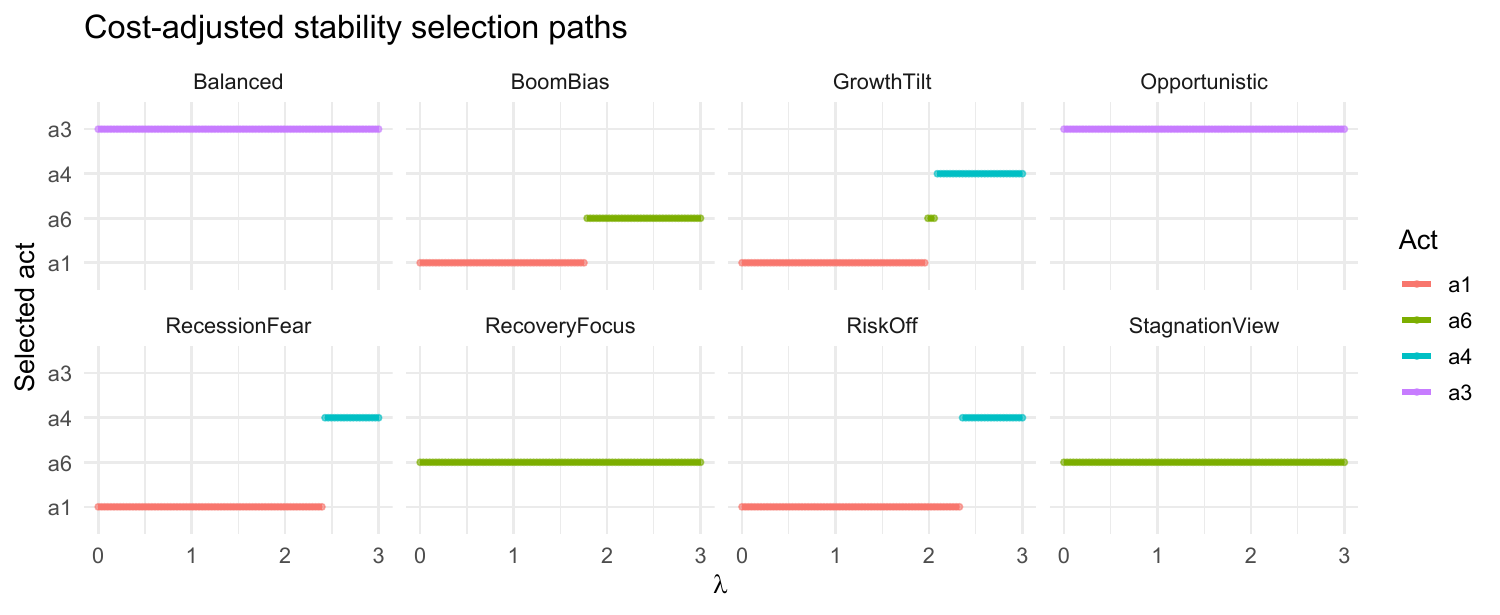}
    \caption{Selection paths of the cost-adjusted stability criterion for the portfolios $a_1 , \dots , a_6$ along the different choices of $0 \le\lambda \le 3$.}
    \label{selection}
\end{figure}
\\[.15cm]
\noindent \textbf{Interpretation of the results.} The results in Figure~\ref{robvscont} illustrate the complementary roles of contamination need and robustness radius: whenever the robustness radius is zero, the contamination need is strictly different from it. Portfolios such as the real-asset tilt portfolio $a_5$ and the equal-weight portfolio $a_6$ typically exhibit relatively large contamination needs (blue bars) across most priors, indicating that they remain suboptimal even under comparatively large perturbations of the underlying prior. In contrast, the equity-heavy portfolio $a_1$ is Bayes-optimal with respect to four out of eight priors, and has comparably high robustness for those priors (e.g., for the risk off prior up to $\approx 9\%$ probability mass could be switched between the states). The selection paths in Figure~\ref{selection} further show how the cost-adjusted stability criterion balances robustness radius and contamination need against external costs (here, the portfolios' variances). For small values of $\lambda$, portfolios with favorable robustness characteristics are preferred, whereas larger values of $\lambda$ increasingly favor portfolios with lower variance-based costs. Overall, the results demonstrate that robustness-based analysis refines classical expected utility comparisons by incorporating the stability of optimality under changing prior assumptions.

\section{Discussion and Future Work} 
The present work introduces \textit{robustness radius} and \textit{contamination need} as two complementary quantitative measures for evaluating the stability of Bayes-acts under perturbations of the underlying prior. Together with the \textit{cost-adjusted stability criterion}, these concepts provide a structured way to move beyond binary notions of optimality and instead describe how\textit{ fragile} or \textit{persistent} an optimal decision is when beliefs about the state space are subject to misspecification.
\\[.2cm]
\noindent
The empirical illustration in the portfolio selection setting suggests that these measures are not only theoretically well-defined but also practically meaningful. In particular, acts that are close in expected utility terms may exhibit markedly different robustness profiles, highlighting that classical Bayes-optimality alone is insufficient to capture decision stability. The introduction of selection costs further refines this picture by incorporating implementation considerations, thereby connecting classical decision theory with cost-of-action trade-offs. 
\\[.2cm]
\noindent
Several avenues for future work arise naturally from this framework. First, it would be of interest to extend the analysis from finite state spaces to general measurable spaces, where linear programming representations are no longer directly available. Second, the current notion of $\varepsilon$-contamination could be generalized to alternative ambiguity neighborhoods, such as Wasserstein balls, which may better reflect specific forms of model uncertainty. Third, the relationship between robustness radius, contamination need, and classical notions of admissibility and complete classes warrants further theoretical investigation. In particular, understanding whether these measures induce a meaningful ordering on the space of acts, or whether they can be embedded into a decision-theoretic representation theorem, remains an open question. Finally, from a machine learning perspective, the proposed stability measures may provide a principled way of quantifying prediction robustness in semi-supervised or self-training settings (e.g., \cite{rodemann,stefan}), where pseudo-label selection inherently induces decision-making under distributional uncertainty. Moreover, also recent decision theoretic-analyses of multicriteria benchmarking problems (e.g., \cite{uaiall,JMLR,jansen2024statistical,BLOCHER2024,gordienko2026}) might benefit from the proposed approaches for sensitivity analyses a lot. Bridging these areas appears to be a promising direction for future research.
\printbibliography
\end{document}